\definecolor{lgreen} {RGB}{180,210,100}
\definecolor{ngreen} {RGB}{98,158,31}
\definecolor{dgreen} {RGB}{78,138,21}
\definecolor{MLOWLSgreen} {RGB}{0,140,130}
\definecolor{SDPpurple} {RGB}{191,0,191}
\definecolor{lred}   {RGB}{220,0,0}
\definecolor{nred}   {RGB}{224,0,0}
\definecolor{bred}   {RGB}{200,20,20}
\definecolor{nblue}  {RGB}{28,130,185}
\definecolor{jblue}  {RGB}{20,50,100}
\newcommand {\myvec}[1] {{\mbox{\boldmath $#1$}}}
\newcommand {\mymat}[1]  {{\mbox{\boldmath $#1$}}}
\DeclareMathAlphabet      {\mathbfit}{OML}{cmm}{b}{it}
\newcommand {\mS} {\mymat{S}}
\newcommand {\mSigma} {\mymat{\Sigma}}
\newcommand {\Omeg} {\mymat{\Omega}}
\newcommand {\U} {\mymat{U}}
\newcommand {\bS} {\mybar{\mS}}
\newcommand {\bX} {\mybar{\X}}
\newcommand {\D} {\mymat{D}}
\newcommand {\upsi} {\myvec{\psi}}
\renewcommand {\H} {\mymat{H}}
\newcommand {\PI} {\mymat{\Pi}}
\newcommand {\I} {\mymat{I}}
\newcommand {\X} {\mymat{X}}
\newcommand {\ub} {\myvec{b}}
\newcommand {\up} {\myvec{p}}
\newcommand {\uv} {\myvec{v}}
\newcommand {\uo} {\myvec{0}}
\newcommand {\us} {\myvec{s}}
\newcommand {\ulambda} {\myvec{\lambda}}
\newcommand {\ux} {\myvec{x}}
\newcommand {\uh} {\myvec{h}}
\newcommand {\uz} {\myvec{z}}
\newcommand {\uvarep} {\myvec{\varepsilon}}
\newcommand {\Rset} {\mathbb{R}}
\newcommand {\Cset} {\mathbb{C}}
\newcommand {\Eset} {\mathbb{E}}
\newcommand {\Diag} {\text{\normalfont Diag}}
\newcommand {\Blkdiag} {\text{\normalfont Blkdiag}}
\newcommand {\tps} {\rm{T}}
\newcommand {\her} {\rm{H}}
\newcommand {\hup} {\widehat{\up}}
\newcommand {\env} {\text{\tiny env}}
\newcommand {\col} {\text{\tiny col}}
\newcommand {\row} {\text{\tiny row}}
\newcommand {\snr} {\mathtt{SNR}}
\newcommand {\sbar} {\mybar{s}}
\newcommand {\xbar} {\mybar{x}}
\newcommand {\vbar} {\mybar{v}}
\newcommand {\buv} {\mybar{\uv}}
\newcommand {\bus} {\mybar{\us}}
\newcommand {\bux} {\mybar{\ux}}
\newcommand {\ud} {\myvec{d}}
\newsavebox\myboxA
\newsavebox\myboxB
\newlength\mylenA
\newcommand*\mybar[2][0.75]{%
	\sbox{\myboxA}{$\m@th#2$}%
	\setbox\myboxB\null
	\ht\myboxB=\ht\myboxA%
	\dp\myboxB=\dp\myboxA%
	\wd\myboxB=#1\wd\myboxA
	\sbox\myboxB{$\m@th\overline{\copy\myboxB}$}
	\setlength\mylenA{\the\wd\myboxA}
	\addtolength\mylenA{-\the\wd\myboxB}%
	\ifdim\wd\myboxB<\wd\myboxA%
	\rlap{\hskip 0.5\mylenA\usebox\myboxB}{\usebox\myboxA}%
	\else
	\hskip -0.5\mylenA\rlap{\usebox\myboxA}{\hskip 0.5\mylenA\usebox\myboxB}%
	\fi}
\newtheorem{prop}{Proposition}
\newtheorem{lem}{Lemma}
\newtheorem{remark}{Remark}
\newtheorem{theorem}{Theorem}
\title{Towards Robust Data-Driven Underwater Acoustic Localization: \\ A Deep CNN Solution with Performance Guarantees for Model Mismatch}
\name{Amir Weiss$^{\star}$, Andrew C.\ Singer$^{\dagger}$, and Gregory W. Wornell$^{\star}$}
\address{
\begin{tabular}{ccc}
$^{\star}$Research Laboratory of Electronics &  $^\dagger$Dept. of Electrical and Computer Engineering\\
Massachusetts Institute of Technology & University of Illinois Urbana-Champaign\\
\{amirwei,gww\}@mit.edu & acsinger@illinois.edu
\end{tabular}
\thanks{
This work was supported, in part, by ONR under Grants No.\ N00014-19-1-2665 and No.\ N00014-19-1-2662, and NSF under Grant No.\ CCF-1816209.
}}
\begin{document}
\ninept
\maketitle

\begin{abstract}
\vspace{-0.1cm}
\small{Key challenges in developing underwater acoustic localization methods are related to the combined effects of high reverberation in intricate environments. To address such challenges, recent studies have shown that with a properly designed architecture, neural networks can lead to unprecedented localization capabilities and enhanced accuracy. However, the robustness of such methods to \emph{environmental mismatch} is typically hard to characterize, and is usually assessed only empirically. In this work, we consider the recently proposed data-driven method [19] based on a deep convolutional neural network, and demonstrate that it can learn to localize in complex and \emph{mismatched} environments. To explain this robustness, we provide an upper bound on the localization mean squared error (MSE) in the ``true" environment, in terms of the MSE in a ``presumed" environment and an additional penalty term related to the environmental discrepancy. Our theoretical results are corroborated via simulation results in a rich, highly reverberant, and mismatch channel.}
\end{abstract}
\begin{keywords}
Localization, reverberation, model mismatch.
\end{keywords}
\vspace{-0.3cm}
\section{Introduction}\label{sec:intro}
\vspace{-0.2cm}

Localization in the underwater acoustic (UWA) domain, which is relevant for a variety of important applications \cite{waterston2019ocean,sun2020underwater,tuna2017survey,erol2010localization}, is challenging due to the environment's physical properties. In particular, the channel between an acoustic source and a receiver may vary dramatically with their positions. While the related literature is abundant with handcrafted model-based solutions \cite{tan2011survey}, many of these methods rely on isovelocity line-of-sight-based geometry, and are thus sensitive to multipath. Other methods that can theoretically operate in nontrivial environments, such as matched field processing \cite{baggeroer1993overview}, are highly sensitive to model mismatch \cite{collins1991focalization}, and are therefore less relevant for some practical purposes. 

Due to the success of deep neural networks (DNN) in other domains, data-driven (DD) methods have been recently considered more extensively as potentially viable solutions to the UWA localization problem \cite{bianco2019machine,niu2019deep,testolin2019underwater,gong2020machine,chen2021model,lefort2017direct,wang2018underwater,qin2020underwater,houegnigan2017machine}. Specifically, a DD variant of the statistically superior direct localization (DLOC) approach \cite{weiss2004direct,wang2020direct} has been proposed in \cite{weiss2022direct}. Yet, while these DD methods are in principle exceptionally powerful, they can be difficult to analyze, and their robustness is generally not guaranteed.

This work is focused on the robustness of a DD-DLOC solution to UWA localization, and its main contribution is an upper bound on the localization mean squared error (MSE) in a mismatched environment. To better contextualize and elaborate on our contributions (listed at the end of Section \ref{sec:problemformulation}), let us first describe in detail the problem at hand. 


\vspace{-0.2cm}
\section{Problem Formulation}\label{sec:problemformulation}
\vspace{-0.2cm}
Consider $L$ spatially-diverse, time-synchronized receivers at known locations, each consisting of a single omni-directional sensor. Furthermore, consider the presence of a source, emitting an unknown waveform from an unknown position, denoted by the vector of coordinates $\up\in\mathcal{V}\subset\Rset^{3\times 1}$, where $\mathcal{V}$ is a volume of interest. We assume that the source is static during the observation time, and is located sufficiently far from all $L$ receivers to permit a planar wavefront (far-field) approximation.\footnote{This becomes reasonably accurate, e.g., in shallow waters at high frequencies.}

In the underwater environment, ``reflectors" at unknown positions are potentially present, giving rise to a (possibly rich) multipath channel impulse response (CIR), which can be approximately described using ray propagation (e.g., \cite{etter2018underwater}). For example, the water surface acts as one of the reflectors. In such a model, the acoustic wave that is emitted from the source and measured at the the receiver is represented as a sum of (possibly infinitely many) rays, each propagating according to physical properties of the environment (e.g., temperature, bathymetry, etc.). For simplicity, we assume the existence of a set of the environmental parameters, denoted by $\mathcal{P}_{\env}$, with which the propagation model is fully characterized.



Formally, the received baseband signal of the $\ell$-th receiver is given by
\begin{equation}\label{modelequation}
\begin{gathered}
x_{\ell}[n]=\sum_{r=1}^{R}b_{r\ell}s_{r\ell}[n]+v_{\ell}[n]\triangleq\us_{\ell}^{\tps}[n]\ub_{\ell}+v_{\ell}[n]\in\Cset,\\
n=1,\ldots,N, \;\forall \ell\in\{1,\ldots,L\},
\end{gathered}
\end{equation}
where we have defined $\us_{\ell}[n]=[s_{1\ell}[n] \cdots s_{R\ell}[n]]^{\tps}\in\Cset^{R\times1}$ and $\ub_{\ell}=[b_{1\ell} \cdots b_{R\ell}]^{\tps}\in\Cset^{R\times1}$, using the notation:
{\begin{enumerate}
    \setlength{\itemsep}{0pt}
	\item $b_{r\ell}\in\Cset$ as the unknown attenuation coefficient\footnote{In fact, these coefficients are also a function of the source's position $\up$ and the environment $\mathcal{P}_{\env}$, but we mostly use $b_{r\ell}$ rather than $b_{r\ell}(\up,\mathcal{P}_{\env})$ for brevity.} from the source to the $\ell$-th sensor associated with the $r$-th signal component (line-of-sight (LOS) or other non-LOS (NLOS) reflections);
	\item $s_{r\ell}[n]\triangleq \left.s\left(t-\tau_{r\ell}(\up,\mathcal{P}_{\env})\right)\right\vert_{t=nT_s}\in\Cset$ as the sampled $r$-th component of the unknown signal's waveform at the $\ell$-th sensor, where $s\left(t-\tau_{r\ell}(\up,\mathcal{P}_{\env})\right)$ is the analog, continuous-time waveform delayed by $\tau_{r\ell}(\up,\mathcal{P}_{\env})$, and $T_s$ is the sampling period; and
	\item $v_{\ell}[n]\in\Cset$ as the additive, ambient and internal receiver, noise at the $\ell$-th receiver, modeled as a zero-mean random process (not necessarily Gaussian) with an unknown finite variance $\sigma_{v_{\ell}}^2$.
\end{enumerate}}

Applying the normalized DFT\footnote{$\mybar{\uz}$ denotes the normalized discrete Fourier transform (DFT) of $\uz$.} to \eqref{modelequation} yields the equivalent frequency-domain representation for all $\ell\in\{1,\ldots,L\}$,
\begin{equation}\label{modelequationfreq}
\begin{aligned}
\hspace{-0.2cm}\xbar_{\ell}[k]&=\sum_{r=1}^{R}b_{r\ell}\sbar[k]e^{-\jmath\omega_k\tau_{r\ell}({\text{\boldmath $p$}},\mathcal{P}_{\env})}+\vbar_{\ell}[k]\\
&=\sbar[k]\hspace{-0.05cm}\cdot\hspace{-0.05cm}\underbrace{\ud_{\ell}^{\her}[k]\ub_{\ell}}_{\substack{\text{a function of} \\ \text{{\boldmath $p$} and $\mathcal{P}_{\env}$} }}+\vbar_{\ell}[k]=\sbar[k]\hspace{-0.05cm}\cdot\hspace{-0.05cm}\underbrace{\bar{h}_{\ell}[k]}_{\substack{\text{frequency} \\ \text{response}}}+\vbar_{\ell}[k]\in\Cset,
\end{aligned}
\end{equation}
where we have defined $\bar{h}_{\ell}[k]\triangleq\ud_{\ell}^{\her}[k]\ub_{\ell}\in\Cset$, and
\begin{equation*}
\ud_{\ell}[k]\triangleq[e^{-\jmath\omega_k\tau_{1\ell}({\text{\boldmath $p$}},\mathcal{P}_{\env})} \cdots e^{-\jmath\omega_k\tau_{R\ell}({\text{\boldmath $p$}},\mathcal{P}_{\env})}]^{\her}\in\Cset^{R\times 1},\label{dft_freq_vec}
\end{equation*}
with $\{\omega_k\triangleq\frac{2\pi(k-1)}{NT_s}\}_{k=1}^N$. As shorthand notation, we further define
\begin{equation*}
\begin{aligned}
\bux_{\ell}\hspace{-0.025cm}&\triangleq\hspace{-0.025cm}\left[\mybar{x}_{\ell}[1] \cdots \mybar{x}_{\ell}[N]\right]^{\tps}, \bus\triangleq \left[\sbar[1] \cdots \sbar[N]\right]^{\tps},\\
\buv_{\ell}\hspace{-0.025cm}&\triangleq\hspace{-0.025cm} \left[\mybar{v}_{\ell}[1] \cdots \mybar{v}_{\ell}[N]\right]^{\tps}\hspace{-0.05cm}, \D_{\ell}\triangleq\left[\ud_{\ell}[1] \cdots \ud_{\ell}[N]\right]^{\tps}\hspace{-0.05cm}\in\Cset^{N\times R},\\
\bX_{\ell}\hspace{-0.025cm}&\triangleq\hspace{-0.025cm}\Diag(\bux_{\ell}), \bS\triangleq\Diag(\bus), \H_{\ell}(\up,\mathcal{P}_{\env})\triangleq\Diag\left(\D_{\ell}\ub_{\ell}\right),
\end{aligned}
\end{equation*}
where $\Diag(\cdot)$ forms a diagonal matrix from its vector argument. Note that $\H_{\ell}(\up,\mathcal{P}_{\env})$ is generally a nonlinear function of the unknown source position $\up$ and the environmental parameters $\mathcal{P}_{\env}$ (e.g., as seen from \eqref{dft_freq_vec}). With this notation, we may now write \eqref{modelequationfreq} compactly as
\begin{equation}\label{signalmodelfreqcompact}
\bux_{\ell}=\H_{\ell}(\up,\mathcal{P}_{\env})\bus+\buv_{\ell}\in\Cset^{N\times 1}, \;\forall\ell\in\{1,\ldots,L\}.
\end{equation}

Given the possibly (highly) intricate relation between $\up$ and $\bux_{\ell}$, the complexity of the localization problem, which consists of estimating $\up$ from the observations $\left\{\bux_{\ell}\right\}_{\ell=1}^{L}$, is now readily apparent from \eqref{signalmodelfreqcompact}.

\begin{figure}[t]
	\includegraphics[width=0.48\textwidth]{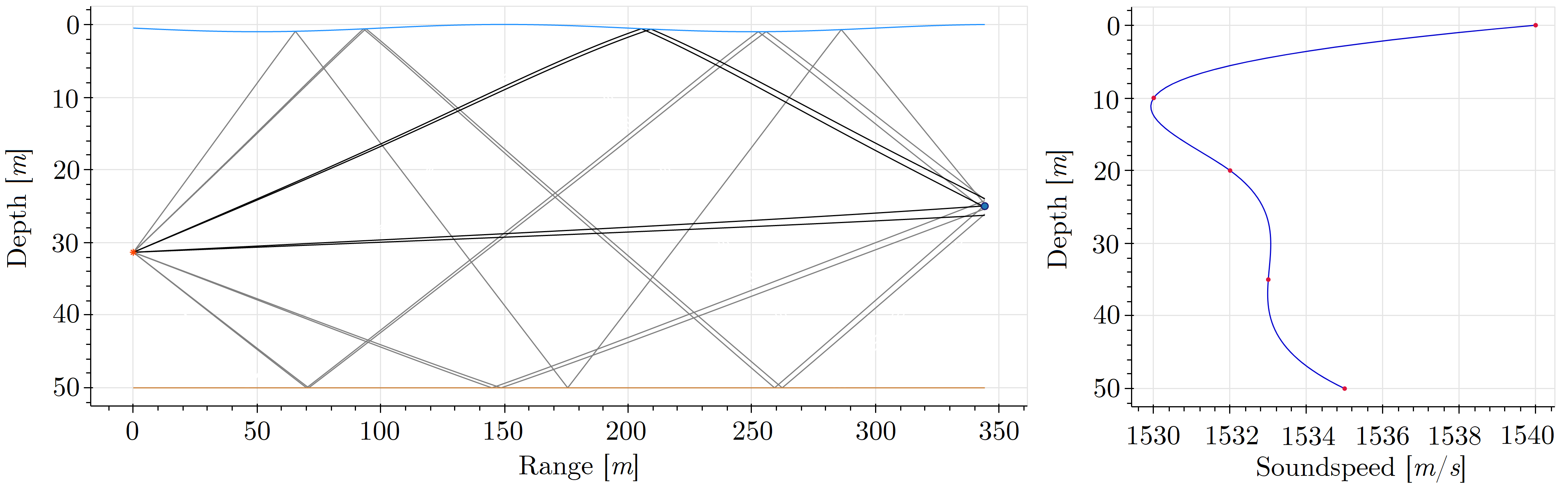}
	\centering\vspace{-0.3cm}
	\caption{Illustration of the simulated environment using Bellhop \cite{porter2011bellhop}. Left: Ray propagation paths from the source (red) to a receiver (blue). Right: The assumed depth-varying speed of sound profile.}
	\label{fig:example}\vspace{-0.5cm}
\end{figure}
To further illustrate the potentially formidable computational task of localizing a source via an exact model-based solution in an environment with nontrivial propagation properties, consider the UWA environment illustrated in Fig.~\ref{fig:example}. Given a source-receiver pair, the relevant rays propagating from the source to the receiver can be described in a plane perpendicular to an ideal flat bottom, in which the source and the receiver lie. As an example, the complexity of the method \cite{weiss2022semi} in its extended form \cite{weiss2022direct} is $\Omega(NL^2R^2)$ even before considering that all the $LR$ time-delays of all the rays to all the receivers must be computed as well.\footnote{For environments such as the on in Fig.~\ref{fig:example}, this may already be infeasible.}

It is therefore natural to adopt a learning-based solution approach, that will provide a method to localize the source while: (i) judiciously using those environmental characteristics that can be learned during the training phase; (ii) being robust to reasonably moderate deviations from the learned model; and (iii) efficiently approximating a reliable estimator $\hup=f(\bux_{1},\ldots,\bux_{L})$ that is otherwise computationally infeasible.

The remainder of this paper is devoted to the study of the desired merits of a learning-based solution as mentioned above. Specifically, 
\begin{itemize}
    \item We demonstrate that a DD approach can yield a computationally efficient DLOC solution that is ``stable"---in the sense of its MSE performance---with respect to model mismatch deviations. 
    \item We provide analytical guarantees on the MSE localization performance of an estimator designed for one environment, when applied to data from a another environment (i.e., model mismatch).
\end{itemize}

\vspace{-0.3cm}
\section{Learning to Localize}\label{sec:learning2localize}
\vspace{-0.2cm}
Let us assume that we do not have precise knowledge of the environment $\mathcal{P}_{\env}$ or the propagation model, based on which $\H_{\ell}(\up,\mathcal{P}_{\env})$ is computed for any $\up$. Rather, assume that a dataset of received signals with labeled source positions $\mathcal{D}^{(J)}_{\text{train}}\triangleq\{(\bux^{(i)}_{1},\ldots,\bux^{(i)}_{L},\up^{(i)})\}_{i=1}^{J}$ of size $J$ is available. Such a dataset can be obtained in several ways (see \cite{weiss2022direct} for details).
\vspace{-0.2cm}
\subsection{A Data-Driven Direct Localization Method}
\vspace{-0.1cm}
In a recent work \cite{weiss2022direct}, an architecture based on a deep convolutional neural network (CNN) has been proposed as a DD-DLOC solution for the UWA localization problem. This architecture is trained in two phases. In the first phase, three ``sub-models" are trained to estimate range, azimuth and inclination. In the second phase, these three models are fused into a ``global model", that provides an (enhanced) estimate of the source position. The successful operation of this solution has been demonstrated for a $3$-ray isovelocity propagation model, which theoretically allows for accurate DLOC even in the absence of LOS (see \cite{weiss2022semi} for simulation and experimental results). While this is already a powerful capability, in this work we demonstrate that this solution generalizes to even richer environments, which are ``closer"---in some well-defined sense---to the real physical medium in which UWA localization systems operate.

To this end, in this work we consider a simulated UWA environment as our virtual testbed, and adopt the architecture proposed in \cite{weiss2022direct}, to which we introduce a two main modifications. First, we now directly concatenate the outputs of the three sub-models, such that the modified architecture already provides a working solution at the end of the first phase of training (mentioned above). Second, during training the CNN-based solution is further matched to the attenuation law of the channel, by normalizing the signal power with the average attenuation magnitude in the volume of interest. For more implementation details of this architecture, see the supplementary material for this paper, \cite{weiss2022icasspsuppmat}.

To go beyond the $3$-ray model \cite{weiss2022semi}, we generate the CIRs using the Bellhop simulator \cite{porter2011bellhop}. This way, the propagation model can be made exceptionally rich, taking into account nontrivial affects such as ``bending" rays due to depth-varying speed of sound, different bathymetry and surface geometries, etc. Thus, the simulated received signals are more similar to true recorded signals in a real environment. For more details, see \cite{weiss2022icasspsuppmat}.

As we demonstrate in Section \ref{sec:results}, our architecture is able to capture the environment's characteristics relevant for inference regarding the source position. While this is a significant part of the contribution of this work, we nonetheless choose to elaborate on yet another important aspect---robustness to model mismatch. Although a carefully designed simulator (such as Bellhop) can provide good approximations of the real CIRs, it is still only an \emph{approximation} that naturally suffers (to some extent) from modeling error relative to the respective true CIRs as well as unmodelled effects. For example, a simulator may provide a CIR with a finite number of rays, when in practice there may be an infinite number (with finite energy). Therefore, and especially given that analytical characterization of the output of DNNs is generally a hard task, we next turn to the main discussion of this paper---an upper bound on the MSE of the proposed CNN-based solution under model mismatch.

\vspace{-0.2cm}
\section{Performance Guarantees Under a Mismatched Propagation Model}
\vspace{-0.2cm}
For simplicity of the exposition, we assume hereafter that $\sigma_{v_{\ell}}^2=\sigma_{v}^2$ for all $\ell\in\{1,\ldots,L\}$. However, this assumption can be relaxed, and with the necessary slight modifications, the results below are still valid.

The signal model \eqref{modelequation} more faithfully characterizes the true underlying channel, e.g., when $R\to\infty$, in which case $\sum_{r=1}^{\infty}|b_{r\ell}|^2<\infty$ for all $\ell\in\{1,\ldots,L\}$, when the power of the received signal is finite. However, practical localization solutions would take into account only a finite number of NLOS components (if, at all), and would therefore suffer from model mismatch. Nevertheless, if $R$ is chosen to be sufficiently large, such that most of the energy of the received signal is captured by the model, it is reasonable to expect that the devised localization method would still perform properly, with some slight performance degradation relative to the expected performance in the absence of model mismatch. Below we provide a rigorous proof that this intuition is indeed correct. Moreover, we quantify ``which portions" of the channel must be modeled in order to still maintain MSE performance guarantees.

\vspace{-0.3cm}
\subsection{Upper Bounds on the MSE for Mismatched Environment}\label{subsec:upperbounds}
\vspace{-0.1cm}
To facilitate the derivations below, we introduce a compact representation of \eqref{signalmodelfreqcompact}. Let $\bux\triangleq[\bux_1^{\tps} \ldots \bux_L^{\tps}]^{\tps}\in\Cset^{NL\times 1}$, with which \eqref{signalmodelfreqcompact} reads
\begin{equation}\label{compactmodelforCRLB}
\bux=\H(\up,\mathcal{P}_{\env})\bus+\buv\in\Cset^{NL\times 1},
\end{equation}
where $\H(\up,\mathcal{P}_{\env})\triangleq\left[\H^{\tps}_{1}(\up,\mathcal{P}_{\env}) \ldots \H^{\tps}_{L}(\up,\mathcal{P}_{\env})\right]^{\tps}\in\Cset^{NL\times N}$ and $\buv\triangleq[\buv_1^{\tps} \ldots \buv_L^{\tps}]^{\tps}$. Further, let $Q\triangleq\mathcal{CN}(\uo,\mSigma_{Q})$, $P\triangleq\mathcal{CN}(\uo,\mSigma_{P})$ be the distributions of the presumed and the true\footnote{Possibly with $R\to\infty$, or a different type of model mismatch.} models for \eqref{compactmodelforCRLB} with environmental parameters $\mathcal{Q}_{\env}, \mathcal{P}_{\env}$, respectively, where
\begin{equation}\label{defofcovariancematrices}
\begin{aligned}
    \mSigma_{Q}&\triangleq\Eset_Q\left[\bux\bux^{\her}\right]=\sigma_s^2\H(\up,\mathcal{Q}_{\env})\H^{\her}(\up,\mathcal{Q}_{\env})+\sigma_v^2\I_{NL}, \\ \mSigma_{P}&\triangleq\Eset_P\left[\bux\bux^{\her}\right]=\sigma_s^2\H(\up,\mathcal{P}_{\env})\H^{\her}(\up,\mathcal{P}_{\env})+\sigma_v^2\I_{NL}.
\end{aligned}
\end{equation}

Let $\hup:\bux\to\Rset^{K_p\times1}$ be a location estimator of $\up$, and define $\uvarep_{\text{\boldmath $p$}}\triangleq\hup-\up$, its respective estimation error vector. We have the following result:
\vspace{-0.5cm}
\begin{prop}\label{prop0}
Denote $Q_{\varepsilon}$ and $P_{\varepsilon}$ as the distributions of $\uvarep_{\text{\boldmath $p$}}$ when $\bux$ is distributed according to $Q$ and $P$, respectively. Then the MSE of $\hup$ evaluated for the true model $P$ is upper bounded by
\begin{align}
    \Eset_{P}\left[\|\uvarep_{\text{\boldmath $p$}}\|^2\right]&\leq \Eset_{Q}\left[\|\uvarep_{\text{\boldmath $p$}}\|^2\right]+\delta(P_{\varepsilon},Q_{\varepsilon}),\label{strongupperbound}
\end{align}
where $\delta(P_{\varepsilon},Q_{\varepsilon})\triangleq\sqrt{{\rm Var}_{Q}\left(\|\uvarep_{\text{\boldmath $p$}}\|^2\right)\chi^2\left(P_{\varepsilon}||Q_{\varepsilon}\right)}$, and $\chi^2\left(P_{\varepsilon}||Q_{\varepsilon}\right)$ denotes the chi-square divergence (CSD) of $P_{\varepsilon}$ from $Q_{\varepsilon}$.
\end{prop}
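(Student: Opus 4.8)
The plan is to bound the difference $\Eset_{P}[\|\uvarep_{\text{\boldmath $p$}}\|^2] - \Eset_{Q}[\|\uvarep_{\text{\boldmath $p$}}\|^2]$ by a single integral against the difference of densities, and then apply the Cauchy--Schwarz inequality in a way that produces the variance under $Q$ and the chi-square divergence. Write $g(\uvarep_{\text{\boldmath $p$}}) \triangleq \|\uvarep_{\text{\boldmath $p$}}\|^2$, and let $p_\varepsilon, q_\varepsilon$ denote the densities of $P_\varepsilon, Q_\varepsilon$ (assuming a common dominating measure, which holds here since both are pushforwards of Gaussians through the same estimator map). Then
\begin{equation*}
\Eset_{P}[g] - \Eset_{Q}[g] = \int g(\uvarep)\,\bigl(p_\varepsilon(\uvarep) - q_\varepsilon(\uvarep)\bigr)\,d\uvarep.
\end{equation*}
The key algebraic trick is to subtract a constant: since $\int (p_\varepsilon - q_\varepsilon)\,d\uvarep = 0$, we may replace $g(\uvarep)$ by $g(\uvarep) - \Eset_{Q}[g]$ without changing the integral. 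This centering is exactly what converts the second moment into a variance in the final bound.

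Next I would factor the integrand through $q_\varepsilon$ and apply Cauchy--Schwarz:
\begin{align*}
\Eset_{P}[g] - \Eset_{Q}[g] &= \int \bigl(g(\uvarep) - \Eset_{Q}[g]\bigr)\cdot\frac{p_\varepsilon(\uvarep) - q_\varepsilon(\uvarep)}{q_\varepsilon(\uvarep)}\cdot q_\varepsilon(\uvarep)\,d\uvarep \\
&\leq \left(\int \bigl(g(\uvarep) - \Eset_{Q}[g]\bigr)^2 q_\varepsilon(\uvarep)\,d\uvarep\right)^{1/2} \left(\int \left(\frac{p_\varepsilon(\uvarep) - q_\varepsilon(\uvarep)}{q_\varepsilon(\uvarep)}\right)^2 q_\varepsilon(\uvarep)\,d\uvarep\right)^{1/2}.
\end{align*}
The first factor is precisely $\sqrt{{\rm Var}_{Q}(\|\uvarep_{\text{\boldmath $p$}}\|^2)}$, and the second factor is $\sqrt{\chi^2(P_\varepsilon \| Q_\varepsilon)}$ by the standard identity $\chi^2(P_\varepsilon\|Q_\varepsilon) = \int (p_\varepsilon - q_\varepsilon)^2/q_\varepsilon\,d\uvarep = \int (p_\varepsilon/q_\varepsilon)^2 q_\varepsilon\,d\uvarep - 1$. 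Rearranging gives $\Eset_{P}[g] \leq \Eset_{Q}[g] + \sqrt{{\rm Var}_{Q}(\|\uvarep_{\text{\boldmath $p$}}\|^2)\,\chi^2(P_\varepsilon\|Q_\varepsilon)} = \Eset_{Q}[g] + \delta(P_\varepsilon, Q_\varepsilon)$, which is \eqref{strongupperbound}.

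The main technical obstacle is justifying the manipulations when $q_\varepsilon$ vanishes or the densities are not absolutely continuous with respect to a common measure: if $P_\varepsilon \not\ll Q_\varepsilon$ then $\chi^2(P_\varepsilon\|Q_\varepsilon) = +\infty$ and the bound is vacuously true, so the interesting case is $P_\varepsilon \ll Q_\varepsilon$, where the Radon--Nikodym derivative $dP_\varepsilon/dQ_\varepsilon$ is well defined and the division above should really be written in terms of it (replacing $p_\varepsilon/q_\varepsilon$ by $dP_\varepsilon/dQ_\varepsilon$ and integrating against $dQ_\varepsilon$). A secondary point worth a remark is finiteness: the bound is only meaningful when ${\rm Var}_{Q}(\|\uvarep_{\text{\boldmath $p$}}\|^2) < \infty$, i.e., the estimator has a finite fourth moment of its error under the presumed model $Q$; under $Q = \mathcal{CN}(\uo, \mSigma_Q)$ with a reasonable estimator this is mild, but it should be stated as a standing assumption. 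Everything else is a routine application of Cauchy--Schwarz once the centering trick is in place.
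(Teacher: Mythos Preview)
Your proof is correct and is exactly the standard Cauchy--Schwarz-with-centering argument underlying the result the paper invokes; the paper's own proof simply cites \cite[Th.~1]{weiss2022bilateralbound} without reproducing the derivation, so you have in fact supplied more detail than the authors do. Your remarks on absolute continuity and the finite-fourth-moment assumption are also appropriate caveats that the paper leaves implicit.
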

\begin{proof}
The bound \eqref{strongupperbound} is obtained by directly applying \cite[Th.~1]{weiss2022bilateralbound}.
\end{proof}
\begin{remark}
The bound \eqref{strongupperbound} is agnostic to the type/form of model mismatch.
\end{remark}
The meaning of Prop.~\ref{prop0} is the following. If for the presumed environment $\mathcal{Q}_{\env}$, which is used for training of the neural network, the localization error is $Q_{\varepsilon}$-distributed, and this distribution is ``not too far"---in the CSD sense---from the true localization error distribution $P_{\varepsilon}$, then the MSE for the real data $P$ is also ``not too far"---in the sense of \eqref{strongupperbound}---from the MSE for the presumed data $Q$.

While \eqref{strongupperbound} is the strong version of the type of bound \cite{weiss2022bilateralbound}, it is instructive to analyze a looser version thereof (stated below), which nevertheless provides better insights as to the type of more ``tolerable" forms of mismatch. To this end,  we will make use of the following technical lemmas.
\vspace{-0.5cm}
\begin{lem}\label{lemma1}
Let $\mybar{\uh}_{Q}^{(k)}$, shorthand for $\mybar{\uh}_{Q}^{(k)}(\up,\mathcal{Q}_{\emph{\env}})$, be defined as\footnote{Recall that $\bar{h}_{\ell}[k]$, defined in \eqref{modelequationfreq}, depends on the environment $\mathcal{Q}_{\env}$ or $\mathcal{P}_{\env}$.}
\begin{equation}\label{newdeffreqvectors}
\mybar{\uh}_{Q}^{(k)}(\up,\mathcal{Q}_{\emph{\env}})\triangleq\left[\bar{h}_{1}[k]\;\cdots\;\bar{h}_{L}[k]\right]^{\tps}\in\Cset^{L\times 1},
\end{equation}
where $\{\bar{h}_{\ell}[k]=\ud_{\ell}^{\her}[k]\ub_{\ell}\}$ in \eqref{newdeffreqvectors} are for $\mathcal{Q}_{\env}$. Then the eigenvalues of
\begin{equation}\label{rankoneupdatedmatrix}
    \widetilde{\H}_{Q}^{(k)}\triangleq\sigma_s^2\mybar{\uh}_{Q}^{(k)}{\mybar{\uh}_{Q}^{(k)}}^{\her} + \sigma_v^2\I_L
\end{equation}
are given by
\begin{equation}\label{simpleformofeigenvalues}
    \ulambda^{(Q)}_k=[\sigma_s^2\|\mybar{\uh}_{Q}^{(k)}\|^2+\sigma_v^2 \;\; \underbrace{\sigma_v^2 \;\; \cdots \;\; \sigma_v^2}_{L-1\text{ \emph{terms}}}]^{\tps}\in\Rset^{L\times 1}_+.
\end{equation}
\end{lem}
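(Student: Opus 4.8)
The plan is to recognize $\widetilde{\H}_{Q}^{(k)}$ in \eqref{rankoneupdatedmatrix} as a rank-one perturbation of a scaled identity and to read off its spectrum directly. First I would write $\uh$ as shorthand for $\mybar{\uh}_{Q}^{(k)}$ and observe that $\uh\uh^{\her}$ is Hermitian positive semidefinite of rank at most one, with range $\mathrm{span}(\uh)$ and null space the orthogonal complement $\uh^{\perp}$, which has dimension $L-1$. Since $(\uh\uh^{\her})\uh=(\uh^{\her}\uh)\uh=\|\uh\|^2\uh$, the vector $\uh$ is an eigenvector of $\uh\uh^{\her}$ with eigenvalue $\|\uh\|^2$, while every $\uw\perp\uh$ gives $(\uh\uh^{\her})\uw=\uo$, i.e.\ the eigenvalue $0$ with multiplicity $L-1$. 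Scaling by $\sigma_s^2$, the matrix $\sigma_s^2\uh\uh^{\her}$ therefore has eigenvalues $\sigma_s^2\|\uh\|^2$ (simple) and $0$ (multiplicity $L-1$), with the same eigenvectors.

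Next I would add $\sigma_v^2\I_L$. Because the identity commutes with every matrix, the eigenvectors of $\widetilde{\H}_{Q}^{(k)}=\sigma_s^2\uh\uh^{\her}+\sigma_v^2\I_L$ coincide with those of $\sigma_s^2\uh\uh^{\her}$, and each eigenvalue is shifted by $\sigma_v^2$. This yields exactly the claimed spectrum in \eqref{simpleformofeigenvalues}: a single eigenvalue $\sigma_s^2\|\uh\|^2+\sigma_v^2$ (eigenvector $\uh/\|\uh\|$) and $L-1$ eigenvalues equal to $\sigma_v^2$ (eigenvectors spanning $\uh^{\perp}$). As a sanity check, the traces agree: $\Tr(\widetilde{\H}_{Q}^{(k)})=\sigma_s^2\|\uh\|^2+L\sigma_v^2$ matches the sum of the entries of $\ulambda^{(Q)}_k$. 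Finally I would dispose of the degenerate case $\uh=\uo$: then $\widetilde{\H}_{Q}^{(k)}=\sigma_v^2\I_L$ has all $L$ eigenvalues equal to $\sigma_v^2$, which is still consistent with \eqref{simpleformofeigenvalues} since $\|\uh\|^2=0$ in that case.

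There is essentially no obstacle here: the statement is the textbook fact about the spectrum of a rank-one update of $\sigma_v^2\I_L$, and the argument above is complete once the two eigenspaces $\mathrm{span}(\uh)$ and $\uh^{\perp}$ are identified. The only points requiring a little care are the bookkeeping of multiplicities in \eqref{simpleformofeigenvalues} and the remark that the derivation is carried out frequency-by-frequency, so it holds verbatim for every index $k$.
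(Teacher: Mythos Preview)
Your argument is correct and follows essentially the same approach as the paper: both identify $\widetilde{\H}_{Q}^{(k)}$ as a rank-one update of $\sigma_v^2\I_L$ and exploit that $\mybar{\uh}_{Q}^{(k)}$ is the distinguished eigenvector. The only cosmetic difference is that the paper dispatches the eigenvalue computation by citing a general rank-one-update theorem, whereas you carry out the (equally short) direct verification, which is arguably cleaner here.
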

\begin{proof}
The proof follows by identifying \eqref{rankoneupdatedmatrix} as a special rank-one ``updated matrix"\footnote{Using the same terminology as the one used \cite{ding2007eigenvalues}.} of $\sigma_v^2\I_L$, observing that $\sigma_s\mybar{\uh}_{Q}^{(k)}$ is an eigenvector of $\sigma_v^2\I_L$ with an eigenvalue $\sigma_v^2$, and applying \cite[Th. 2.1.]{ding2007eigenvalues} to get \eqref{simpleformofeigenvalues}.
\end{proof}
\begin{lem}\label{prop1}{\emph{\textbf{\cite[Ch.~7]{comon2010handbook}}}}
Assume $\det\left(\mSigma_{P}\right)\neq0$. Let $\Omeg\triangleq\mSigma_{Q}\mSigma_{P}^{-1}$, and denote $\Omeg\U=\U\Diag(\upsi)$ as the eigendecomposition of $\Omeg$. Then, if
\begin{equation*}
    \psi_m\neq\psi_n^*, \quad \forall m\neq n\in\{1,\ldots,NL\},
\end{equation*}
$\U$ is a joint diagonalizer (JD) of $(\mSigma_{Q},\mSigma_{P})$. In particular,
\begin{equation}\label{jointdiagonalization}
    \mSigma_{Q}=\U\Diag\left(\ulambda^{(Q)}\right)\U^{\her}, \mSigma_{P}=\U\Diag\left(\ulambda^{(P)}\right)\U^{\her},
\end{equation}
for $\ulambda^{(Q)}\triangleq[\ulambda_1^{(Q)} \cdots \ulambda_{N}^{(Q)}]^{\tps},\ulambda^{(P)}\triangleq[\ulambda_1^{(P)} \cdots \ulambda_{N}^{(P)}]^{\tps}\in\Rset_+^{NL\times1}$, where $\{\ulambda_{k}^{(Q)}\in\Rset_+^{L\times1}\}_{k=1}^N$ are defined in \eqref{simpleformofeigenvalues}, and $\{\ulambda_{k}^{(P)}\in\Rset_+^{L\times1}\}_{k=1}^N$ are defined similarly but for the environment $\mathcal{P}_{\emph{\env}}$.
\end{lem}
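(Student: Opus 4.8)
The plan is to reprove, for the Hermitian positive definite pencil at hand, the classical equivalence between the generalized eigendecomposition of $(\mSigma_Q,\mSigma_P)$ and their joint diagonalization (the content of \cite[Ch.~7]{comon2010handbook}), and then to extract the explicit form of the diagonal entries from the block structure of $\mSigma_Q,\mSigma_P$ together with Lemma~\ref{lemma1}. To begin, note from \eqref{defofcovariancematrices} that $\mSigma_Q,\mSigma_P$ are Hermitian and that $\mSigma_P\succeq\sigma_v^2\I_{NL}\succ0$, so $\mSigma_P^{-1/2}$ exists; since $\mSigma_P^{-1/2}\Omeg\mSigma_P^{1/2}=\mSigma_P^{-1/2}\mSigma_Q\mSigma_P^{-1/2}$ is Hermitian positive definite, $\Omeg$ is diagonalizable with real positive eigenvalues, so the eigendecomposition $\Omeg\U=\U\Diag(\upsi)$ with invertible $\U$ is well defined.

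The key algebraic step is to exploit Hermitian symmetry: from $\mSigma_Q=\Omeg\mSigma_P$ and $\mSigma_Q=\mSigma_Q^{\her}=(\Omeg\mSigma_P)^{\her}=\mSigma_P\Omeg^{\her}$ one obtains the identity $\Omeg\mSigma_P=\mSigma_P\Omeg^{\her}$. Substituting $\Omeg=\U\Diag(\upsi)\U^{-1}$ and $\Omeg^{\her}=\U^{-\her}\Diag(\upsi^{*})\U^{\her}$ and then multiplying on the left by $\U^{-1}$ and on the right by $\U^{-\her}$ yields $\Diag(\upsi)\G=\G\Diag(\upsi^{*})$, where $\G\triangleq\U^{-1}\mSigma_P\U^{-\her}$ is Hermitian positive definite. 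Componentwise this reads $(\psi_m-\psi_n^{*})G_{mn}=0$, and the separation hypothesis $\psi_m\neq\psi_n^{*}$ for $m\neq n$ forces all off-diagonal entries of $\G$ to vanish, i.e.\ $\G=\Diag(\ulambda^{(P)})$ with $\ulambda^{(P)}\in\Rset_+^{NL\times1}$. Hence $\mSigma_P=\U\Diag(\ulambda^{(P)})\U^{\her}$, and then $\mSigma_Q=\Omeg\mSigma_P=\U\Diag(\upsi)\Diag(\ulambda^{(P)})\U^{\her}=\U\Diag(\ulambda^{(P)}\odot\upsi)\U^{\her}$, so with $\ulambda^{(Q)}\triangleq\ulambda^{(P)}\odot\upsi$ we have shown that $\U$ is a joint diagonalizer of $(\mSigma_Q,\mSigma_P)$ and established \eqref{jointdiagonalization}.

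It then remains to identify $\ulambda^{(Q)},\ulambda^{(P)}$ with the vectors in Lemma~\ref{lemma1}. For this I would permute the $NL$ coordinates of $\bux$ into frequency-major order: under this ordering $\H(\up,\mathcal{Q}_{\env})\H^{\her}(\up,\mathcal{Q}_{\env})$ is block diagonal with $N$ blocks, the $k$-th being $\mybar{\uh}_{Q}^{(k)}{\mybar{\uh}_{Q}^{(k)}}^{\her}$ as in \eqref{newdeffreqvectors} (and similarly for $P$), so that $\mSigma_Q=\Blkdiag_{k=1}^{N}(\widetilde{\H}_Q^{(k)})$ and $\mSigma_P=\Blkdiag_{k=1}^{N}(\widetilde{\H}_P^{(k)})$ with $\widetilde{\H}_Q^{(k)},\widetilde{\H}_P^{(k)}$ of the rank-one-updated form \eqref{rankoneupdatedmatrix}. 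Then $\Omeg$ inherits the block structure, $\U$ may be taken block diagonal, and the joint diagonalization decouples into $N$ per-block joint diagonalizations of the pairs $(\widetilde{\H}_Q^{(k)},\widetilde{\H}_P^{(k)})$, for which Lemma~\ref{lemma1} supplies the entries \eqref{simpleformofeigenvalues}; stacking over $k$ reproduces the claimed $\ulambda^{(Q)},\ulambda^{(P)}$.

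The main obstacle is the step that makes $\G$ diagonal: this is precisely where the hypothesis $\psi_m\neq\psi_n^{*}$ is indispensable, and also why the statement can fail without it (if some eigenvalues coincide, $\G$ need only be block diagonal, and the eigenvector matrix of $\Omeg$ need not diagonalize the pencil in the $\U\Diag(\cdot)\U^{\her}$ sense). A secondary point requiring care is that $\Omeg$ is not Hermitian, so one must first justify the existence of a complete eigenbasis — handled above via the similarity to $\mSigma_P^{-1/2}\mSigma_Q\mSigma_P^{-1/2}$ — and keep track of the normalization of the columns of $\U$, which is immaterial for the joint-diagonalization property itself but fixes the explicit entries when matching Lemma~\ref{lemma1}.
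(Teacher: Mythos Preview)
Your proposal is correct. The paper does not actually prove Lemma~\ref{prop1}: the joint-diagonalization claim is simply attributed to \cite[Ch.~7]{comon2010handbook}, and your self-contained derivation via the Hermitian-symmetry identity $\Omeg\mSigma_P=\mSigma_P\Omeg^{\her}$, leading to $\Diag(\upsi)\G=\G\Diag(\upsi^{*})$ with $\G=\U^{-1}\mSigma_P\U^{-\her}$ and then invoking the separation hypothesis, is precisely the standard argument behind that citation. Your identification of the diagonal entries---permuting to frequency-major order, obtaining the block-diagonal structure $\Blkdiag_k(\widetilde{\H}_Q^{(k)})$, and reading off the per-block spectra from Lemma~\ref{lemma1}---is exactly what the paper does, but only later, in Phase~1 of the proof of Theorem~\ref{theorem1}; so on that part your approach and the paper's coincide, and your caveat about column normalization of $\U$ is well placed.
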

\noindent We now provide a generally looser, yet more insightful version of Prop.~\ref{prop0}.
\begin{theorem}\label{theorem1}
Assume the conditions of Lemma \ref{prop1} hold, and let $\U$ be the JD of $(\mSigma_{Q},\mSigma_{P})$ with $\ulambda^{(Q)},\ulambda^{(P)}$ as defined in \eqref{jointdiagonalization}. Then, if
\begin{equation}\label{simplifiedgammacondition}
\begin{gathered}
    2\left\|\mybar{\uh}_{Q}^{(k)}\right\|^2+\snr^{-1}>\left\|\mybar{\uh}_{P}^{(k)}\right\|^2, \quad \forall k\in\{1,\ldots,N\},
\end{gathered}
\end{equation}
where $\snr\triangleq\frac{\sigma_s^2}{\sigma_v^2}$, the MSE of $\hup$ for the true model $P$ is upper bounded by
\begin{align}
    \Eset_{P}\left[\|\uvarep_{\text{\boldmath $p$}}\|^2\right]&\leq \Eset_{Q}\left[\|\uvarep_{\text{\boldmath $p$}}\|^2\right]+\delta(P,Q),\label{weakerupperbound}
\end{align}
where $\delta(P,Q)\triangleq\sqrt{{\rm Var}_{Q}\left(\|\uvarep_{\text{\boldmath $p$}}\|^2\right)\Delta^2}$ with
\begin{equation}
\begin{aligned}\label{chisquaredivergencesimplified}
    &\Delta^2+1\triangleq\\
    &\prod_{k=1}^{N}\frac{\left(\snr\|\mybar{\uh}_{Q}^{(k)}\|^2+1\right)^2}{\left(\snr\|\mybar{\uh}_{P}^{(k)}\|^2+1\right)\left(\snr(2\|\mybar{\uh}_{Q}^{(k)}\|^2-\|\mybar{\uh}_{P}^{(k)}\|^2)+1\right)}.
\end{aligned}
\end{equation}
\end{theorem}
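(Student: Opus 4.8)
\emph{Proof proposal.} The plan is to deduce \eqref{weakerupperbound} directly from the strong bound \eqref{strongupperbound} of Proposition~\ref{prop0}: the estimator-dependent part is identical, so I only need to control the penalty $\delta(P_\varepsilon,Q_\varepsilon)$ by $\delta(P,Q)$, which amounts to showing $\chi^2(P_\varepsilon\|Q_\varepsilon)\le\chi^2(P\|Q)=\Delta^2$, with $\Delta^2$ as in \eqref{chisquaredivergencesimplified}, under condition \eqref{simplifiedgammacondition}.

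For the first inequality, observe that for the fixed true position $\up$ the estimation error $\uvarep_{\text{\boldmath $p$}}=\hup(\bux)-\up$ is a deterministic (measurable) function of $\bux$, so $P_\varepsilon$ and $Q_\varepsilon$ are the push-forwards of $P$ and $Q$ under that map. Since the CSD is an $f$-divergence, the data-processing inequality gives $\chi^2(P_\varepsilon\|Q_\varepsilon)\le\chi^2(P\|Q)$; because ${\rm Var}_Q(\|\uvarep_{\text{\boldmath $p$}}\|^2)$ is common to both penalties, this yields $\delta(P_\varepsilon,Q_\varepsilon)\le\sqrt{{\rm Var}_Q(\|\uvarep_{\text{\boldmath $p$}}\|^2)\,\chi^2(P\|Q)}$, which equals $\delta(P,Q)$ once the identity $\chi^2(P\|Q)=\Delta^2$ is established. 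Substituting into \eqref{strongupperbound} then gives \eqref{weakerupperbound}.

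It remains to evaluate $\chi^2(P\|Q)$. Writing $\chi^2(P\|Q)+1=\Eset_P[\,dP/dQ\,]$ and performing the resulting complex-Gaussian integral gives the standard identity $\chi^2(P\|Q)+1=\det(\mSigma_Q)\,\big[\det(\mSigma_P)\big]^{-2}\,\big[\det(2\mSigma_P^{-1}-\mSigma_Q^{-1})\big]^{-1}$, valid precisely when $2\mSigma_P^{-1}-\mSigma_Q^{-1}\succ 0$ (and $\chi^2(P\|Q)=+\infty$, so the bound is vacuous, otherwise). I would then insert the joint diagonalization of Lemma~\ref{prop1}, $\mSigma_Q=\U\Diag(\ulambda^{(Q)})\U^{\her}$ and $\mSigma_P=\U\Diag(\ulambda^{(P)})\U^{\her}$: the $|\det\U|$ factors cancel, all three determinants factor over the $NL$ generalized eigenpairs, and one gets $\chi^2(P\|Q)+1=\prod_{j}\big(\lambda^{(Q)}_j\big)^2\big/\big(\lambda^{(P)}_j(2\lambda^{(Q)}_j-\lambda^{(P)}_j)\big)$, the positivity condition reducing to $2\lambda^{(Q)}_j>\lambda^{(P)}_j$ for all $j$. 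Plugging in the eigenvalues of Lemma~\ref{lemma1}: in each of the $N$ frequency sub-blocks, $L-1$ eigenvalues equal $\sigma_v^2$ under both models, so those factors are identically $1$ and $2\sigma_v^2>\sigma_v^2$ holds automatically; only the signal eigenvalues $\sigma_s^2\|\mybar{\uh}^{(k)}_Q\|^2+\sigma_v^2$ and $\sigma_s^2\|\mybar{\uh}^{(k)}_P\|^2+\sigma_v^2$ survive. Dividing through by $\sigma_v^4$ and using $\snr=\sigma_s^2/\sigma_v^2$, the $k$-th factor becomes $(\snr\|\mybar{\uh}_Q^{(k)}\|^2+1)^2\big/\big[(\snr\|\mybar{\uh}_P^{(k)}\|^2+1)(\snr(2\|\mybar{\uh}_Q^{(k)}\|^2-\|\mybar{\uh}_P^{(k)}\|^2)+1)\big]$, while the surviving positivity requirement becomes exactly $2\|\mybar{\uh}_Q^{(k)}\|^2+\snr^{-1}>\|\mybar{\uh}_P^{(k)}\|^2$, i.e.\ \eqref{simplifiedgammacondition}. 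Taking the product over $k=1,\dots,N$ recovers \eqref{chisquaredivergencesimplified}.

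The main obstacle is the penultimate step: correctly using Lemma~\ref{prop1} to diagonalize $\mSigma_Q$ and $\mSigma_P$ \emph{simultaneously} with the explicit eigenvalue lists of Lemma~\ref{lemma1}, so that $\det(2\mSigma_P^{-1}-\mSigma_Q^{-1})$ splits frequency-by-frequency and the repeated noise eigenvalues cancel, and then carefully bookkeeping which of the surviving factors must stay positive---this is what singles out \eqref{simplifiedgammacondition} rather than a weaker or stronger inequality. A secondary technicality is that $\sigma_v^2$ has multiplicity $L-1$ in each block, which formally violates the simplicity hypothesis $\psi_m\neq\psi_n^*$ of Lemma~\ref{prop1}; this is handled by a perturbation/continuity argument, or by isolating the (at most two-dimensional) signal subspace of each frequency block, on whose orthogonal complement $\mSigma_Q$ and $\mSigma_P$ coincide and contribute a factor of $1$.
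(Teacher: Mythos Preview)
Your proposal is correct and follows essentially the same route as the paper: compute $\chi^2(P\|Q)$ for the zero-mean complex-Gaussian pair, jointly diagonalize via Lemma~\ref{prop1}, plug in the eigenvalues of Lemma~\ref{lemma1}, and watch all but the single ``signal'' eigenvalue per frequency block cancel to obtain \eqref{chisquaredivergencesimplified} and \eqref{simplifiedgammacondition}. The only cosmetic difference is your first step---you derive the weaker bound from Proposition~\ref{prop0} via the data-processing inequality for $f$-divergences, whereas the paper simply cites the corollary \cite[Cor.~1]{weiss2022bilateralbound} that already states the bound with $\chi^2(P\|Q)$ in place of $\chi^2(P_\varepsilon\|Q_\varepsilon)$; and your determinant identity is written as $\det(\mSigma_Q)\big/\big(\det(\mSigma_P)^2\det(2\mSigma_P^{-1}-\mSigma_Q^{-1})\big)$ rather than the paper's equivalent $\det(\mSigma_Q)\big/\big(\det(\mSigma_P)\det(2\I_{NL}-\mSigma_P\mSigma_Q^{-1})\big)$.
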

\begin{proof}
By applying \cite[Cor.~1]{weiss2022bilateralbound}, we obtain
\begin{align}
    \Eset_{P}\left[\|\uvarep_{\text{\boldmath $p$}}\|^2\right]&\leq \Eset_{Q}\left[\|\uvarep_{\text{\boldmath $p$}}\|^2\right]+\sqrt{{\rm Var}_{Q}\left(\|\uvarep_{\text{\boldmath $p$}}\|^2\right)\chi^2\left(P||Q\right)}.\label{upperbound}
\end{align}

Since in this case $P=\mathcal{CN}(\uo,\mSigma_{P})$, $Q=\mathcal{CN}(\uo,\mSigma_{Q})$, computation of $\chi^2\left(P||Q\right)$ by evaluating the integral implied by \cite[Eq.~4]{weiss2022bilateralbound} yields
\begin{equation}\label{chisquareofzeromeanGaussians}
    \chi^2\left(P||Q\right)=\frac{\det\left(\mSigma_{Q}\right)}{\det\left(\mSigma_{P}\right)\det\left(2\I_{NL}-\mSigma_{P}\mSigma^{-1}_{Q}\right)} - 1.
\end{equation}
By assumption, the conditions of Lemma \ref{prop1} hold. Therefore, substituting \eqref{jointdiagonalization} into \eqref{chisquareofzeromeanGaussians} readily gives
\begin{equation}\label{chisquaredivergenceclosedform}
    \Delta^2\triangleq\prod_{k=1}^{N}\prod_{\ell=1}^{L}\left(\frac{\gamma^2[k,\ell]}{2\gamma[k,\ell]-1}\right)-1,
\end{equation}
under the condition
\begin{equation}\label{conditiontheorem1}
\begin{gathered}
\hspace{-0.1cm}\gamma[k,\ell]\triangleq\frac{\lambda^{(Q)}_{k}[\ell]}{\lambda^{(P)}_{k}[\ell]}>\frac{1}{2}, \, \forall k\in\{1,\ldots,N\}, \, \forall \ell\in\{1,\ldots,L\}.
\end{gathered}
\end{equation}
This arises because, after joint diagonalization, the entries of the transformed complex-normal (CN) vectors become uncorrelated, and due to their circular symmetry, they are statistically independent. Therefore, by virtue of the tensorization property of the CSD, each term of the product \eqref{chisquaredivergenceclosedform} corresponds to a CSD (plus one) of two \emph{univariate} zero-mean CN random variables, which must be non-negative.

The proof now continues with two phases. In \emph{Phase 1}, we obtain closed-form expressions for the eigenvalues $\ulambda^{(Q)},\ulambda^{(P)}$. In \emph{Phase 2}, we express $\{\gamma[k,\ell]\}$ in terms of these eigenvalues, to obtain \eqref{simplifiedgammacondition} and \eqref{chisquaredivergencesimplified}.

\underline{\emph{Phase 1}}: The elements $\{\lambda^{(Q)}_{k}[\ell],\lambda^{(P)}_{k}[\ell]\}$ of $\{\ulambda_k^{(Q)},\ulambda_k^{(P)}\}$ can be efficiently computed, as follows. Recall that $\{\H_{\ell}\}$ are all diagonal. Hence, the covariance matrices $\mSigma_{Q}$ and $\mSigma_{P}$ \eqref{defofcovariancematrices} have a block structure, where each $N\times N$ block is a diagonal matrix. Thus, by applying the appropriate permutation matrix $\PI$ and its inverse $\PI^{\tps}$ on both sides, we have (focusing on $\mSigma_{Q}$, since the logical argument is identical for $\mSigma_{P}$)
\begin{equation*}
    \mSigma_{Q}=\PI_{\row}^{\tps}\Big(\PI_{\row}\mSigma_{Q}\PI_{\col}\Big)\PI^{\tps}_{\col}\triangleq\PI_{\row}^{\tps}\widetilde{\mSigma}_Q\PI^{\tps}_{\col},
\end{equation*}
where $\PI_{\row}, \PI_{\col}$ permute the rows and columns, respectively, and where
\begin{align*}
    \widetilde{\mSigma}_Q &=\Blkdiag\left(\widetilde{\H}_{Q}^{(1)},\ldots,\widetilde{\H}_{Q}^{(N)}\right)\in\Cset^{NL\times NL},
\end{align*}
and $\Blkdiag(\cdot,\ldots,\cdot)$ forms a block diagonal matrix from its (square) matrix arguments. Since any permutation matrix is a unitary matrix, $\det\left( \mSigma_{Q}\right)=\det(\widetilde{\mSigma}_{Q})$. Moreover, since $\widetilde{\mSigma}_{Q}$ is a block diagonal matrix, if follows that, for every $k$, $\ulambda_{k}^{(Q)}$ are the eigenvalues of $\widetilde{\H}_{Q}^{(k)}$ \eqref{rankoneupdatedmatrix}. Using the same arguments, $\ulambda_{k}^{(P)}$ are the eigenvalues of $\widetilde{\H}_{P}^{(k)}$. We conclude that $\ulambda^{(Q)}$, $\ulambda^{(P)}$ can be computed via $N$ eigendecomposition of $L\times L$ matrices, rather than a single (na\"ive) eigendecomposition of an $NL\times NL$ matrix. To complete phase 1, we have from Lemma \ref{lemma1} that
\begin{equation}\label{closedformexpressionsofeigenvalues}
    \lambda_k^{(D)}[\ell]=\begin{cases}
    \sigma_s^2\|\mybar{\uh}_{D}^{(k)}\|^2+\sigma_v^2, & \ell=1,\\
    \sigma_v^2, & 2\leq\ell\leq L,
    \end{cases} \; D\in\{Q,P\}.
\end{equation}

\underline{\emph{Phase 2}}:
Using the closed-form expressions \eqref{closedformexpressionsofeigenvalues}, we now have
\begin{equation}\label{gammaclosedform}
    \gamma[k,\ell]=\frac{\lambda^{(Q)}_{k}[\ell]}{\lambda^{(P)}_{k}[\ell]}=\begin{cases}
    \frac{\sigma_s^2\left\|\mybar{\text{\scriptsize \boldmath$h$}}_{Q}^{(k)}\right\|^2+\sigma_v^2}{\sigma_s^2\left\|\mybar{\text{\scriptsize \boldmath$h$}}_{P}^{(k)}\right\|^2+\sigma_v^2}, & \ell=1,\\
    1, & 2\leq\ell\leq L.
    \end{cases}
\end{equation}
Therefore, substituting \eqref{gammaclosedform} into \eqref{chisquaredivergenceclosedform}, the latter is further simplified into
\begin{align*}
    &\Delta^2+1=\prod_{k=1}^{N}\prod_{\ell=1}^{L}\left(\frac{\gamma^2[k,\ell]}{2\gamma[k,\ell]-1}\right)=\prod_{k=1}^{N}\left(\frac{\gamma^2[k,1]}{2\gamma[k,1]-1}\right)\\
    &=\prod_{k=1}^{N}\frac{\left(\sigma_s^2\|\mybar{\uh}_{Q}^{(k)}\|^2+\sigma_v^2\right)^2}{\left(\sigma_s^2\|\mybar{\uh}_{P}^{(k)}\|^2+\sigma_v^2\right)\left(\sigma_s^2(2\|\mybar{\uh}_{Q}^{(k)}\|^2-\|\mybar{\uh}_{P}^{(k)}\|^2)+\sigma_v^2\right)}.
\end{align*}
Using $\snr=\frac{\sigma_s^2}{\sigma_v^2}$, we obtain \eqref{chisquaredivergencesimplified}, and with \eqref{gammaclosedform}, \eqref{conditiontheorem1} simplifies to \eqref{simplifiedgammacondition}.
\end{proof}

It is instructive to evaluate \eqref{chisquaredivergencesimplified} in the following signal-to-noise ratio (SNR) limiting cases. 

\textbf{High SNR Regime:} To better understand what governs the gap $\delta(P,Q)$ at high SNR, define $\rho_k\triangleq\|\mybar{\uh}_{P}^{(k)}\|^2/\|\mybar{\uh}_{Q}^{(k)}\|^2$, and observe that
\begin{equation*}
    \lim_{\snr\to\infty}\Delta^2=\prod_{k=1}^{N}\frac{1}{\rho_k(2-\rho_k)}-1.
\end{equation*}
Thus, the gap is governed by product of functions of all the \emph{per-frequency} energy deviations of the true frequency response from the presumed one.

\textbf{Low SNR Regime:} In this case, for any set of finite $\{\rho^{(k)}\}$ per-frequency energy deviations, at low SNR, it follows from \eqref{gammaclosedform} that
\begin{equation*}
    \forall k\in\{1,\ldots,N\}:\,\lim_{\snr\to0^{+}}\gamma[k,1]=1 \Longrightarrow \lim_{\snr\to0^{+}}\Delta^2=0.
\end{equation*}
Hence, in the low SNR limit, the model mismatch is essentially irrelevant (in the sense of the attainable MSE). 

A ``conceptual interpolation" between these two extreme cases suggests that as the SNR increases, the model mismatch has an increasingly stronger effect on the gap $\Delta^2$, and consequently on the gap $\delta(P,Q)$.



\vspace{-0.3cm}
\subsection{Simulation Results: DLOC in a Mismatched Environment}\label{sec:results}
\vspace{-0.1cm}
\begin{figure}[t]
	\includegraphics[width=0.475\textwidth]{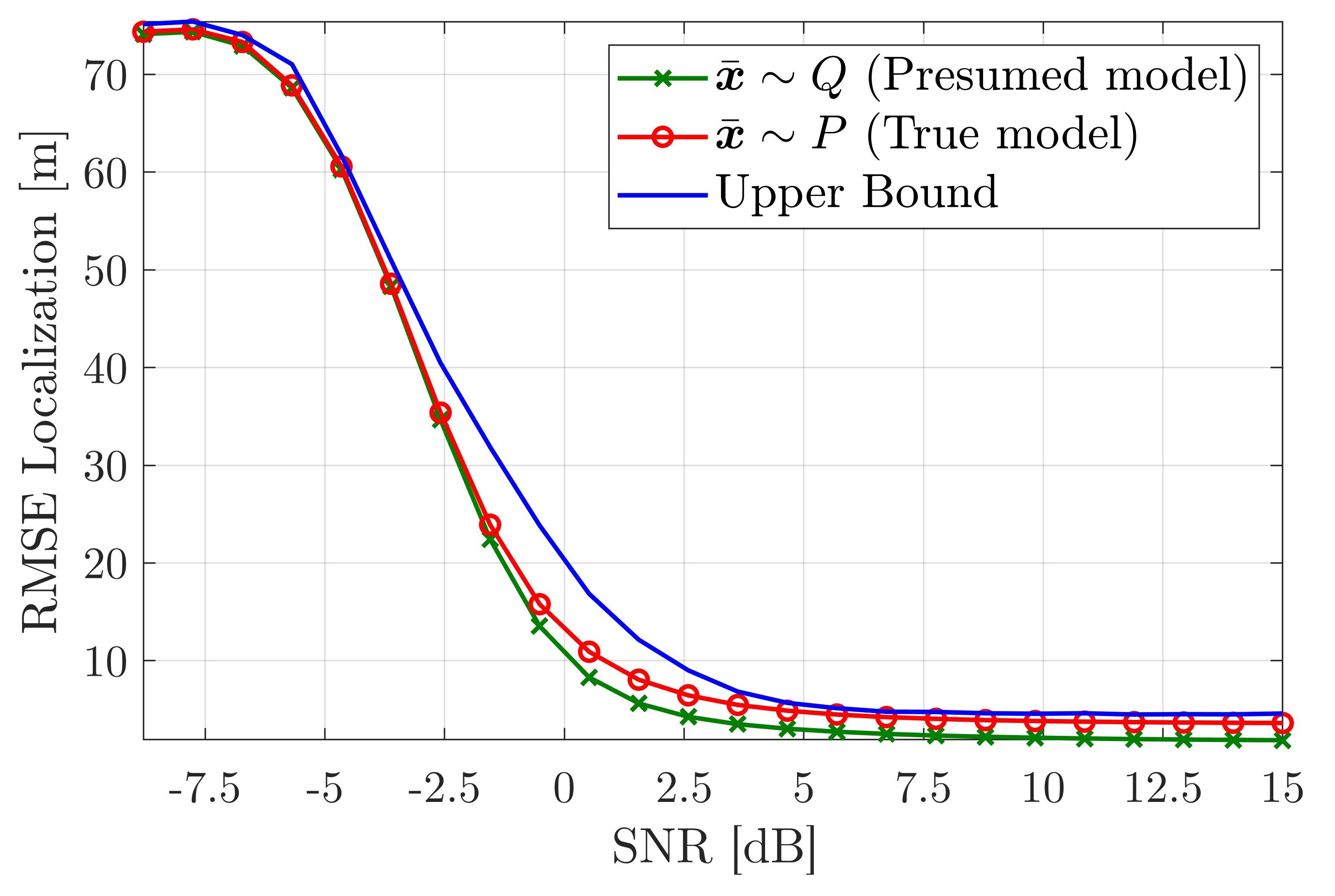}
	\centering\vspace{-0.3cm}
	\caption{Localization RMSE vs.\ SNR for the presumed and true models, $Q$ and $P$, respectively, and the upper bound \eqref{strongupperbound}.}
	\label{fig:results}\vspace{-0.5cm}
\end{figure}
We consider the same scenario described in \cite{weiss2022direct} with $L=4$ receivers, and a single source whose position was drawn (once, and then fixed) from a 3-dimensional uniform distribution supported on the volume of interest. However, instead of the $3$-ray propagation model as in \cite{weiss2022direct}, we now use the Bellhop simulator to simulate a substantially richer environment, and generate the resulting CIR of each source-receiver pair corresponding to a pre-specified set of environmental parameters. For the training dataset $\mathcal{D}^{(J)}_{\text{train}}$, we use the environment $\mathcal{Q}_{\env}$ (similar to the one illustrated in Fig.~\ref{fig:example}), and for the test dataset we use $\mathcal{P}_{\env}$, in which we use a \emph{different} depth-varying speed of sound profile. The technical details pertaining to these environments are deferred to the supplementary material \cite{weiss2022icasspsuppmat}.

Fig.~\ref{fig:example} shows the root MSE (RMSE) vs.\ the SNR\footnote{More accurately, a normalized SNR, relative to the average CIR attenuation.} obtained by our CNN-based solution for: (i) a testset that is $Q$-distributed, the same as the training set, from the environment $\mathcal{Q}_{\env}$, and; (ii) a testset that is $P$-distributed, stemming from the mismatched environment $\mathcal{P}_{\env}$. We also plot the upper bound \eqref{strongupperbound}, which was evaluated numerically.\footnote{The CSD was estimated using the method in \cite{poczos2011estimation,poczos2012nonparametric}, see also \cite{ryu2022nearest}.} The results are based on averaging $10^6$ independent trials per SNR point. 

First, the robustness of the proposed DD-DLOC solution is evident, as reflected from the graceful performance degradation of the red curve relative to the green. Further, the upper bound \eqref{strongupperbound} is not only fairly tight at the low and high SNR regimes, but is also informative about the SNR region in which the threshold phenomenon takes place. These results demonstrate the successful operation of our DD-DLOC solution in a rich and highly reverberant UWA environment, and at the same time corroborate our theoretical result in Prop.~\ref{prop0}.

\vspace{-0.3cm}
\section{Concluding Remarks}\label{sec:conclusion}
\vspace{-0.2cm}
The theoretical results of this work have an operational interpretation and an important implication to the DD-DLOC problem. They suggest that a system with a DNN that was trained on a dataset representing one environment ($\mathcal{Q}_{\env}$), which gives rise to a collection of possible CIRs, may still perform well (in the sense of the upper bounds \eqref{strongupperbound} or \eqref{upperbound}) if it is deployed in a different environment ($\mathcal{P}_{\env}$) that gives rise to a \emph{different} collection of possible CIRs, as long as the overall mismatch is not too large (e.g., \eqref{simplifiedgammacondition}). This has been demonstrated when using the deep CNN discussed in Section \ref{sec:learning2localize} for a simulated reverberant UWA environment.

An important aspect to be studied through future lines of this work is the domain-specific, informed incorporation of randomness to the training dataset. Such randomness should be selected so as to balance between localization accuracy and the MSE-robustness to CIR deviations.

\bibliographystyle{IEEEbib}
\small{\bibliography{refs}}

\end{document}